\def\ps@pprintTitle{%
 \let\@oddhead\@empty
 \let\@evenhead\@empty
 \def\@oddfoot{\centerline{\thepage}}%
 \let\@evenfoot\@oddfoot}
\DeclareMathOperator{\division}{division}
\DeclareMathOperator{\remainder}{remainder}
\DeclareMathOperator{\floor}{floor}
\DeclareMathOperator{\ceiling}{ceiling}
\newtheorem{remark}{Remark}
\newtheorem{lemma}{Lemma}
\newtheorem{theorem}{Theorem}
\newtheorem{proposition}{Proposition}
\newcommand{\ourtitle}{Integer Division by Constants: Optimal Bounds}
\long\def\ignore#1{}
\journal{Information Processing Letters}
 \author[UQAM]{Daniel Lemire\corref{cor1}} \ead{daniel.lemire@teluq.ca}
 \author[UQAM]{Colin Bartlett} \ead{natekurz@gmail.com}
 \author[Upscaledb]{Owen Kaser} \ead{chris@crupp.de}
 \address[UQAM]{\scriptsize  Universit\'e du Qu\'ebec (TELUQ),  5800 Saint-Denis, Montreal, Quebec, Canada
}
\address[Upscaledb]{\scriptsize Computer Science Department, UNB Saint John, New Brunswick, Canada}
\begin{document}
\title{\ourtitle{}}

\begin{frontmatter}

\begin{abstract}









The integer division of a numerator $n$ by a divisor $d$ gives a  quotient $q$ and a remainder $r$.  Optimizing compilers accelerate software by replacing the division
of $n$ by $d$ with the division of  $c * n$ (or $c * n + c$) by $m$ for convenient integers $c$ and $m$ chosen so that they approximate the reciprocal: $c/m\approx 1/d$.  
Such techniques are especially advantageous when $m$ is chosen to be a power of two and when $d$ is a constant so that $c$ and $m$ can be precomputed.  
The literature contains many bounds on the distance between $c/m$ and the divisor $d$. Some of these bounds are optimally tight, while others are not. We present optimally tight bounds for quotient and remainder computations.
\end{abstract}

\begin{keyword}
    Integer Division \sep Compiler Optimization \sep Tight Bounds
\end{keyword}

\end{frontmatter}


\section{Introduction}

The problem of computing the integer division given constant divisors has a long history in computer science~\cite{Artzy:1976:FDT:359997.360013,jacobsohn1973combinatoric,Li:1985:FCD:4135.4990,Vowels:1992:D}.
Granlund and Montgomery~\cite{Granlund:1994:DII:773473.178249} present the first general-purpose algorithms to divide integers by constants using a multiplication and a division by a power of two: their work was adopted by the GNU Compiler Collection (GCC). Given any non-zero 32-bit divisor known at compile time, the optimizing compiler can replace the division by a multiplication followed by a shift. 
Warren~\cite{warr:hackers-delight-1st} improved on the Granlund and Montgomery technique by deriving a better bound that gives a wider range of choices.
 Warren's  better approach is found in LLVM's Clang compiler.
 Many optimizing compilers rely on equivalent techniques, either based on the original Granlund-Montgomery article or on Warren's technique. 
 
 Robison~\cite{Robison:2005:NUD:1078021.1078059} describes a slightly superior alternative for some divisors in that we multiply and add the multiplier before dividing by a power of two (henceforth the multiply-add technique). Though it comes at the cost of an  addition, it allows one to choose a smaller multiplier, which can be advantageous.  Robison's approach is implemented in the popular libdivide library~\cite{fish2011}.
 
Most of the literature is focused on the computation of the quotient $q$ of the division of $n$ by $d$. From the quotient $q$, we can compute the remainder as $n-q * d$. We can also compute the remainder directly~\cite{lemire2019faster} without first computing the quotient: it is given by taking remainder of $c*n$ divided by $m$, and then multiplying it by $m$. However, for the remainder and the quotient to be exact, it is necessary that $c/m$ approximates $1/d$ more closely than if we merely need the quotient.

From the computation of remainders, we can derive a divisibility check, that is check whether $d$ divides $n$, or, equivalently, check that $n$ is a multiple of $d$. Though it may seem that computing the remainder and checking whether it is zero is efficient, we can  simplify and accelerate the algorithm by avoiding the computation of the remainder.

The literature commonly assumes that $m$ is a power of two. We approach the problem more generally, letting $m$ and $c$ be any integer, and restricting the numerator to an interval $[0,N]$ where $N$ can be any integer. It makes our exposition more general, while simplifying the notation.

Some our novel contributions are as follows:
\begin{itemize}
\item We improve Robison's bound~\cite{Robison:2005:NUD:1078021.1078059}, in a manner similar to how Warren improved Granlund and Montgomery's bound. That is, we provide an optimal  bound for the multiply-add technique.\footnote{Drane et al.~\cite{drane2012correctly} have a related bound, but they also have additional constraints on the divisor $d$.}
\item We derive a new tighter bounds for computing the quotient directly and checking the divisibility, thus improving on the work of Lemire at al.~\cite{lemire2019faster}
\item We show that we can adapt Robison's technique to compute remainders directly and derive a novel bound. We  adapt the multiply-add technique for the purpose of a divisibility check. To our knowledge, these results are  novel.
\end{itemize}
All our bounds on how close $c/m$ must be to $1/d$ are optimal and form necessary and sufficient conditions. Table~\ref{table:bigsummary} presents  our core results in concise manner.

\begin{table*}[tbh]\centering
\caption{\label{table:bigsummary} Summary of main results. Throughout, all values are non-negative integers, the divisor is non-zero $d>0$, and the numerator is bounded by $N\geq d$, so that $n\in [0,N]$. We add the constraint that $c \in[0,m)$ so that $c$ is as small as possible.}
\small

\begin{tabular}{|p{1cm}p{1.5cm}p{8.5cm}|}
\hline
\multicolumn{3}{|l|}{Theorem~\ref{thm:quotient} and  Warren~\cite{warr:hackers-delight-2e}}\\
 & statement: &
$\division(n,d) =\division(c*n , m)$ for all $n\in [0,N]$ \\
& condition: & $1/d \leq c/m < \left (1 + \frac{1}{N- \remainder(N+1,d)} \right )1/d$ \\\hline
\multicolumn{3}{|l|}{Theorem~\ref{thm:remainder} (novel, improves Lemire et al.~\cite[Theorem~1]{lemire2019faster})}\\
&statement: &
$\division(n,d) =\division(c*n , m)$ and $\remainder(n,d) =\division(\remainder(c*n , m) * d , m)$ for all $n\in [0,N]$ \\
& condition: & $1/d \leq c/m< \left (1+\frac{1}{N} \right )1/d$ \\\hline
\multicolumn{3}{|l|}{Proposition~\ref{prop:divisibility} (novel, generalizes Lemire et al.~\cite{lemire2019faster})}\\
&statement: &
 $d$ divides $n\in[0, N]$  if and only if 
$\remainder(c*n , m)  < c$ \\
& condition: & $1/d \leq c/m< \left (1+\frac{1}{N} \right )1/d$ \\\hline
\multicolumn{3}{|l|}{Theorem~\ref{thm:quotientrobison}  (improves Robison~\cite{Robison:2005:NUD:1078021.1078059})}\\
&statement: &
$\division(n,d) =\division(c*n +c , m)$ for all $n\in [0,N]$ \\
& condition: & $\left (1-  \frac{1}{N- \remainder(N,d) + 1} \right ) 1/d \leq c/m < 1/d$ \\\hline
\multicolumn{3}{|l|}{Theorem~\ref{thm:remainderrobison} (novel)}\\
&statement: &
$\division(n,d) =\division(c*n +c , m)$ and $\remainder(n,d) =\division(\remainder(c*n +c, m) * d , m)$ for all $n\in [0,N]$ \\
& condition: & $\left (1-\frac{1}{N+1} \right )1/d \leq c/m< 1/d$ \\\hline
\multicolumn{3}{|l|}{Proposition~\ref{prop:multiplyadddivisibility} (novel)}\\
&statement: &
 $d$ divides $n\in[0, N]$  if and only if 
$\remainder(c*n+c, m)  < c$ \\
& condition: & $\left (1-\frac{1}{N+1} \right )1/d \leq c/m<  1/d$ \\\hline
\end{tabular}

\end{table*}

\section{Other Related Work}

The problem of quickly computing the division by a constant in computers dates back to at least the 1970s.
Jacobsohn~\cite{jacobsohn1973combinatoric} shows that we can divide by an odd integer  by multiplying by a fractional inverse, followed by some rounding.
Artzy et al.~\cite{Artzy:1976:FDT:359997.360013} describe a related algorithm to divide multiples of a known divisor (exact division). Li~\cite{Li:1985:FCD:4135.4990} presents algorithms for integer division by all odd integers up to 55~\cite[\S~10-18]{warr:hackers-delight-2e}.
Divisions are executed as series of ``shift and add'' instructions.

Magenheimer et al.~\cite{Magenheimer:1987:IMD:36177.36189}
describe how to compute the division of
integers by odd divisors  as a multiplication and an addition followed by a division by a power of two. Their approach was later refined by Robison~\cite{Robison:2005:NUD:1078021.1078059}. Similarly, 
Granlund and Montgomery's approach~\cite{Granlund:1994:DII:773473.178249} (without an intermediate addition) was refined by  Cavagnino and Werbrouck~\cite{Cavagnino:2008:EAI:1388169.1388172}, and later by Warren~\cite{warr:hackers-delight-2e}. As remarked by Robison~\cite{Robison:2005:NUD:1078021.1078059}, the two approaches (with and without an intermediate addition) are complementary: we can choose one or the other depending on the divisor. We review and elaborate on this complementarity in \S~\ref{sec:complementarity}.

To our knowledge, the latest work on the software acceleration of the division by constants was Lemire et al.~\cite{lemire2019faster}. They revisited two specific problems: the direct computation of the remainder---without first computing the quotient---and the related divisibility tests. Compared to optimizing compilers that compute the remainder by first computing the quotient, they found that their direct approach could be up to 30\% faster. Their divisibility test could be twice as fast as the code produced by popular optimizing compilers and libraries. It can also be up to twice as fast as the state-of-the-art  divisibility check proposed by Granlund and Montgomery~\cite{Granlund:1994:DII:773473.178249}.
They did not consider the multiply-add approach, a gap that we fill with \S~\ref{sec:multiply-add}. We also make their main result~\cite[Theorem~1]{lemire2019faster} tighter (see Theorem~\ref{thm:remainder}). We similarly improve mathematically on their divisibility check~\cite[Proposition~1]{lemire2019faster} (see Proposition~\ref{prop:divisibility}). Our improvements may not immediately result in improved software performance, but they fill a conceptual gap.  The systematic computation of the remainder directly as proposed by Lemire et al., without first computing the quotient, has received  attention in the hardware and circuit literature~\cite{331617,Doran1995,7933010,deDinechin2012} but had never been generally exploited in software as far as we know. One practical exception was the work by Vowels~\cite{Vowels:1992:D} who described the direct computation of both the quotient and remainder, in the special case where we divide by 10.

\section{Technical Preliminaries}

For non-negative real numbers $z$, $\floor(z)$ is the greatest integer no larger than $z$. It is a monotonic function: if $z_1 \geq z_2$ then $\floor(z_1)\geq \floor(z_2)$.

We define $\division(x , y)\equiv  \floor(x/y)$
and  \begin{eqnarray}\remainder(x , y)&\equiv& x -\division(x,y)*y\\
&= &x - \floor(x/y)*y\end{eqnarray} for positive real numbers $x, y$ with the constraint that $y\neq 0$. We have that $\remainder(x , y)\in [0,y)$. If $y$ is an integer and $x$ is not an integer, then $\remainder(x , y)\neq 0$. By definition, we always have that $x= \division(x,y)*y+\remainder(x , y)$.

\begin{lemma}\label{lemma:buzz2}Consider a positive integer $d> 0$, a non-negative integer $n$ and a non-negative real number $x$. We have that 
$\remainder(n,d) = \floor(\remainder(x,d))$
    and 
$\division(n,d) = \division(x,d)$
if and only if 
$n\leq x <n+1.$
\end{lemma}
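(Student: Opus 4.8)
The plan is to reduce the two-part equality condition to the single statement that $n$ is the integer floor of $x$, which is exactly what $n \le x < n+1$ expresses for integer $n$, and then to translate between the division data of $n$ and that of $x$ through their quotient–remainder decompositions. Throughout I set $q = \division(n,d)$ and $s = \remainder(n,d)$, so that $n = qd + s$ with $s$ an integer satisfying $0 \le s \le d-1$.

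First I would prove the forward direction. Assuming $n \le x < n+1$, the decomposition gives $qd + s \le x < qd + s + 1$, and dividing by $d$ yields $q + s/d \le x/d < q + (s+1)/d$. The key observation is that $s/d \ge 0$ and $(s+1)/d \le 1$ (using $s \le d-1$), so $q \le x/d < q+1$ and hence $\division(x,d) = \floor(x/d) = q = \division(n,d)$. With the quotient pinned down we have $\remainder(x,d) = x - qd$, and the same inequalities give $s \le \remainder(x,d) < s+1$, so $\floor(\remainder(x,d)) = s = \remainder(n,d)$.

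For the converse I would assume both equalities and let $q$ and $s$ denote the common quotient and remainder values. By definition $n = qd + s$, and by the decomposition of $x$ we have $x = \division(x,d)\,d + \remainder(x,d) = qd + \remainder(x,d)$. Since $\floor(\remainder(x,d)) = s$ forces $s \le \remainder(x,d) < s+1$, substituting back gives $qd + s \le x < qd + s + 1$, that is $n \le x < n+1$.

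The only delicate step is the boundary case $s = d-1$ in the forward direction, where $(s+1)/d = 1$: here I must invoke the \emph{strict} upper bound $x < n+1$ to keep $x/d$ strictly below $q+1$ and so prevent the quotient from incrementing. This is precisely the place where the half-open nature of $[n, n+1)$ matters, and it is what makes the statement a genuine biconditional rather than a single implication.
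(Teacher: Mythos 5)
Your proof is correct and takes essentially the same route as the paper: your converse direction is the paper's ($\Rightarrow$) argument, combining $s \le \remainder(x,d) < s+1$ with the equality of quotients to sandwich $x$ in $[n,n+1)$, differing only in that you substitute through the decomposition $x = qd + \remainder(x,d)$ rather than adding the quotient equation to the remainder inequality. The paper dispatches the other direction with a one-line ``we can verify''; your explicit verification, including the attention to the boundary case $s = d-1$ where the strict inequality $x < n+1$ is what prevents the quotient from incrementing, simply fills in what that verification requires.
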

\begin{proof}
($\Leftarrow$) We can verify that if $n\leq x <n+1$, the previous two conditions are satisfied.

($\Rightarrow$) Assume that  
$\remainder(n,d) = \floor(\remainder(x,d))$
and 
$\division(n,d) = \division(x,d).$
We have 
\begin{eqnarray}\remainder(n,d) &\leq& \remainder(x,d)\\
&<& \remainder(n,d)+1.\end{eqnarray}
 By expanding out 
$\remainder(n,d) = \floor(\remainder(x,d)),$
we have that 
\begin{eqnarray}n - \floor(n/d)*d &\leq& x - \floor(x/d)*d  \\
&<& n - \floor(n/d)*d  + 1.\end{eqnarray}
Expanding out $\division(n,d)=\division(x,d)$ and multiplying by $d$, we have 
$\floor(n/d)*d = \floor(x/d)*d.$
We establish the lemma by adding this last equation to the previous inequality.
\end{proof}

\section{Multiply-Divide Results}
\label{sec:Multiply-Divide}
Given a non-negative numerator $n$ and non-zero divisor $d$,
we want to show that by choosing integer constants $c$ and $m$ carefully, we
can compute $\division(n,d)$ and $\remainder(n,d)$ by starting from $c*n$ and dividing by $m$.

\subsection{Quotient}
\label{sec:quotient}

We want to find $c$ and $m$ such that $\division(n,d) = \division(c*n , m)$. Intuitively, this equation implies that $n/d \approx c * n /m $ and $n \approx c * n *d/m $. Let us formalize this intuition.

For any non-negative real number  $x$ and non-negative integer $Q$,
we have that $\floor(x/d)=Q$ is equivalent to $x \in [Qd,Qd+d)$.  Letting
$Q=\floor(n/d)$ and $x = c*n*d/m$, we get $d*\floor(n/d) \leq c*n*d/m <
d*\floor(n/d)+d$.  Since $d*\floor(n/d)=n-\remainder(n,d)$, 
$\floor(x/d) = \division(c*n, m)$ and $Q= \division(n,d)$, we have that
$\division(n,d) = \division(c*n , m)$ is equivalent to
$n-\remainder(n,d)\leq c*n*d/m <n - \remainder(n,d) + d.$

Consider a range of  integer numerators $n \in [0,N]$ for some maximal  integer numerator $N\geq d$.  We want this equation to hold for all $n$.
The equation is satisfied trivially when $n=0$ and $d>0$. Suppose that $n>0$ and rewrite the inequalities as 
$(n-\remainder(n,d))/n  \leq c*d/m < (n - \remainder(n,d) + d)/n.$

Given any $n\in[1,N]$ for some integer $N\geq d$, we have that the leftmost expression
$ (n-\remainder(n,d))/n= 1-\remainder(n,d)/n$
is largest and equal to 1 when 
$\remainder(n,d)=0.$
Meanwhile the rightmost expression $1 + (d- \remainder(n,d))/n$ is smallest when $n$ is as large as possible with $\remainder(n,d)=d-1$.
To prove this bound, partition the possible numerators into sets 
$N_k = \{ n \in [0,N]\,|\,\remainder(n,d)=k\}.$
Fixing $N$ and $d$, we seek the value $n \in [1,N]$ minimizing 
$f(n)=1 + (d- \remainder(n,d))/n.$
For $n \in N_k$ we have $f(n)= 1 + (d- k)/n$ which is minimized for the largest member of $N_k$. 
Let $v$ be the largest member of $N_{d-1}$; we have $f(v)= 1+1/v$. 
We see that the values of $n$ in $[N-d+1, N]$ are  the minimizing values in each $N_k$. 
Among these, we can show $v$ minimizes $f$.
\begin{itemize}
\item Consider any $n \in [N-d+1,N]$ with $n> v$.  Write it as $n = v+k$ with $k> 0$ and $k\leq d-1$
and so $\remainder(n,d)=k-1$ and $f(n) = 1 + (d-k+1)/(v+k)$. As $k$ increases, the numerator decreases and the denominator increases, we have that 
the minimum is reached when $k$ is largest ($d-1$), in which case 
$f(n) = 1 + 2/(v +d -1)\geq 1+2/(v+v)= 1 + 1/v = f(v)$, since $v \geq d-1$.
\item Consider any $n \in [N-d+1,N]$ with $n<v$.
Write $n$ as $v-d+k$, so $\remainder(n,d)$ is again $k-1$.  We have $f(n)=1+(d-k+1)/(v-d+k)$. Again,   as $k$ increases, the numerator decreases and the denominator increases, we have that the minimum is reached when $k$ is largest ($d-1$) in which case $f(n)=1+2/(v-1) < 1 + 1/v = f(v)$.
\end{itemize}
Thus $n=v$ minimizes $f(n)$. We have shown Lemma~\ref{lemma:ohoh} because $v = N- \remainder(N+1,d)$.


\begin{lemma}\label{lemma:ohoh}Given an integer $d>0$, the value of $1 + (d- \remainder(n,d))/n$ over $n=0,1,\ldots, N$ is minimized when $n$ is   $n=N- \remainder(N+1,d)$.
\end{lemma}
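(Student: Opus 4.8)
The plan is to exploit the structure of $\remainder(n,d)$ as a function of $n$: it is periodic with period $d$ and takes each value in $\{0,1,\ldots,d-1\}$ exactly once per period. Since the expression is undefined at $n=0$, I would first restrict attention to $n\in[1,N]$. Writing $f(n)=1+(d-\remainder(n,d))/n$, I would partition $[1,N]$ into the residue classes $N_k=\{n\in[1,N]:\remainder(n,d)=k\}$ for $k=0,\ldots,d-1$. On a fixed class the remainder is constant, so $f(n)=1+(d-k)/n$; because $d-k\geq 1>0$, this is strictly decreasing in $n$ and is therefore minimized at the \emph{largest} element of $N_k$.

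The second step is to observe that the largest elements of $N_0,\ldots,N_{d-1}$ are precisely the $d$ consecutive integers in the top window $[N-d+1,N]$, since any $d$ consecutive integers realize each residue exactly once. Hence the global minimum of $f$ over $[1,N]$ is attained at one of these $d$ candidate points, and the problem reduces to a finite comparison within the window. Among the candidates I would single out $v$, the element with $\remainder(v,d)=d-1$, i.e.\ the largest $n\leq N$ with $d\mid(n+1)$; writing $N+1=qd+\remainder(N+1,d)$ identifies this as $v=N-\remainder(N+1,d)$, matching the claimed minimizer, and gives $f(v)=1+1/v$.

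The remaining and most delicate step is to verify that $v$ beats every other point of the window, which I would handle by splitting into $n>v$ and $n<v$. For $n=v+k$ with $k\geq 1$ the residue drops to $k-1$, so $f(n)=1+(d-k+1)/(v+k)$, which decreases as $k$ grows; evaluating at the extreme $k=d-1$ and invoking monotonicity bounds $f(n)$ below by $1+2/(v+d-1)$, and the comparison $f(n)\geq f(v)$ then reduces to the inequality $v\geq d-1$. For $n=v-d+k$ the analogous computation gives $f(n)=1+(d-k+1)/(v-d+k)$, whose smallest value $1+2/(v-1)$ strictly exceeds $f(v)=1+1/v$. The main obstacle I anticipate is the $n>v$ branch: unlike the other case the inequality is not automatic, and I would need the standing hypothesis $N\geq d$ to guarantee $v\geq d-1$ (since then $d-1\leq N$ itself has residue $d-1$, forcing $v\geq d-1$), which is exactly what lets the bound hold with equality at the boundary rather than being strict.
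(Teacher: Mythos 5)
Your proof is correct and follows essentially the same route as the paper: the same partition into residue classes $N_k$, the same reduction to the window $[N-d+1,N]$ of per-class minimizers, and the same two-case comparison around $v = N - \remainder(N+1,d)$. If anything, your write-up is slightly more careful than the paper's: your $n<v$ branch states the comparison with the correct direction ($1+2/(v-1)$ \emph{strictly exceeds} $1+1/v$, where the paper's text has a sign typo writing ``$<$''), and you explicitly justify $v\geq d-1$ from $N\geq d$, which the paper invokes without proof.
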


Hence we have that
$1 \leq c*d/m < 1 + \frac{1}{N- \remainder(N+1,d)}$
is equivalent to $\division(n,d) = \division(c*n , m)$ for all $n\in [0,N]$.

\begin{theorem}\label{thm:quotient}
Consider an integer divisor $d>0$ and a range of  integer numerators $n\in[0, N]$ where $N\geq d$ is an integer.
We have that $\division(n,d) =\division(c*n , m)$ for all integer numerators $n$ in the range  if and only if
\begin{eqnarray}1/d \leq c/m < \left (1 + \frac{1}{N- \remainder(N+1,d)} \right )1/d.\end{eqnarray} 
\end{theorem}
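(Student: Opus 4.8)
The plan is to turn the quotient equality into a single two-sided inequality on the quantity $c*d/m$, intersect that constraint over every numerator in $[0,N]$, and then identify the two extreme numerators that pin down the left and right endpoints. Because Lemma~\ref{lemma:ohoh} already locates the minimizer of the upper endpoint, the remaining work is essentially bookkeeping together with a check that the extremal numerators actually occur in the range.

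First I would formalize the reduction sketched in the discussion preceding the statement. Fixing $n$, set $Q=\division(n,d)$ and $x=c*n*d/m$, so that $\division(c*n,m)=\floor(x/d)$. Since $d>0$, the identity $\floor(x/d)=Q$ holds exactly when $x\in[Qd,Qd+d)$, and using $d*\division(n,d)=n-\remainder(n,d)$ this becomes
\[
n-\remainder(n,d)\leq \frac{c*n*d}{m}<n-\remainder(n,d)+d.
\]
The case $n=0$ holds automatically because $d>0$. For $n\in[1,N]$ I divide through by $n$ to obtain
\[
1-\frac{\remainder(n,d)}{n}\leq \frac{c*d}{m}<1+\frac{d-\remainder(n,d)}{n}.
\]

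Next I would require this to hold simultaneously for every $n\in[1,N]$, which is equivalent to placing $c*d/m$ at least as large as the largest left endpoint and strictly below the smallest right endpoint. The left endpoint $1-\remainder(n,d)/n$ is maximal, equal to $1$, precisely when $\remainder(n,d)=0$; since $N\geq d$ the numerator $n=d$ lies in the range and attains this, so the necessary and sufficient lower constraint is $c*d/m\geq 1$. For the right endpoint, Lemma~\ref{lemma:ohoh} shows that $1+(d-\remainder(n,d))/n$ is minimized at $n=N-\remainder(N+1,d)$, where its value is $1+\tfrac{1}{N-\remainder(N+1,d)}$; because this minimum is attained, the strict inequality survives intact. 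Combining the two constraints gives
\[
1\leq \frac{c*d}{m}<1+\frac{1}{N-\remainder(N+1,d)},
\]
and dividing by $d>0$ yields the stated bounds, establishing both directions at once.

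The hard part is the optimization of the upper endpoint over all admissible numerators, but this is exactly the content of Lemma~\ref{lemma:ohoh}, which I may assume. The only additional care needed is to confirm that both extremal numerators genuinely lie in $[1,N]$: the minimizer $v=N-\remainder(N+1,d)$ satisfies $1\leq v\leq N$ because $\remainder(N+1,d)\leq d-1\leq N-1$, and $n=d$ lies in the range by the hypothesis $N\geq d$. This is what makes the two bounds not merely sufficient but also necessary, hence optimal.
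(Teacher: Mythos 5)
Your proposal is correct and takes essentially the same route as the paper: the same reduction of the quotient equality to $n-\remainder(n,d)\leq c*n*d/m < n-\remainder(n,d)+d$, division by $n$ for $n\geq 1$, and an appeal to Lemma~\ref{lemma:ohoh} for the minimizer of the upper endpoint, with the lower endpoint attaining $1$ at a multiple of $d$. Your explicit verification that the extremal numerators $n=d$ and $v=N-\remainder(N+1,d)$ lie in $[1,N]$ is a small point the paper leaves implicit, but it does not change the argument.
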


\begin{remark}
Granlund and Montgomery~\cite{Granlund:1994:DII:773473.178249} have an upper bound of $c/m \leq (1+1/(N+1))/d$ as a sufficient (but not necessary) condition. A bound equivalent to Theorem~\ref{thm:quotient} is derived by Warren~\cite{warr:hackers-delight-1st}. 
\end{remark}

Once we have a pair of inequalities as in Theorem~\ref{thm:quotient}, we can solve for $c$ and $m$. It is always possible to do so: we can verify that
$c=1$, $m=d$ is always a solution. However, we may have further constraints on $c$ and $m$: maybe we require $m$ to be a power of two. We can show that as long as we can choose $m$ arbitrarily large, there is always a solution. Letting $K=N- \remainder(N+1,d)$, we can rewrite the inequalities as
$m/d \leq c < \left (1 + \frac{1}{K} \right ) \frac{m}{d}.$
Thus if $c$ is to be as small as possible, we must have that $c = \ceiling (m/d)$. It remains to solve for $m$ such that 
$\ceiling \left ( \frac{m}{d} \right ) < \left (1 + \frac{1}{K} \right )\frac{m}{d}.$
Because $\ceiling(m/d) - m/d<1$, we have that the inequality is always satisfied when 
$m \geq K * d = (N- \remainder(N+1,d)) * d.$
This bound indicates that it is always possible to find a solution, by picking $m$ large enough.

\subsection{Remainder}
\label{sec:remainder}
From the quotient $\division(c*n , m)$, we get the quotient of the division of  $n$ by $d$; it is maybe intuitive that we can derive the remainder of the division of  $n$ by $d$ from  $\remainder(c*n , m)$.

Formally, we want to find integer constants $c>0$ and $m>0$ such that for any integer numerator $n\in[0, N]$ and integer divisor $d>0$, we have that $\remainder(n,d) =\division(\remainder(c*n , m) * d , m)$. 

 If we find $c$ and $m$ such that $\remainder(n,d) =\division(\remainder(c*n , m) * d , m)$ is satisfied, then replacing $c$ with $c+m$ or $c+2m$ would still work: in fact $\remainder(c*n , m) = \remainder(c*n + k*m*n, m) = \remainder((c + k*m)*n , m)$ for any integer $k$. Thus we 
require $c$ to be in $[0, m)$. 

With this constraint ($c\in [0, m)$),  we are able to show (see Lemma~\ref{lemma:three}) that the ability to compute remainders via
$\remainder(n,d) =\division(\remainder(c*n , m) * d , m)$ implies 
that the quotient of $n$ divided by $d$ is given by  $\division(n,d) =\division(c*n , m)$.
Intuitively, it is strictly more difficult to compute the remainder than to compute the quotient. Hence, if we just need the remainder, and not the quotient, we cannot relax our conditions when $c\in [0,m)$.

\begin{lemma}\label{lemma:three}
Consider an integer divisor $d>1$.
Suppose that  we have integer constants $c$ and $m$ such that $c\in [0, m)$
and $\remainder(n,d) =\division(\remainder(c*n , m) * d , m)$ for all
numerators $n\in [0, N]$ then we must have that $\division(n,d) =\division(c*n , m)$.
\end{lemma}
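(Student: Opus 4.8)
The plan is to collapse both the hypothesis and the goal onto the single real number $x = c*n*d/m$ and then run an induction on $n$ that is driven by the hypothesis holding \emph{simultaneously} at every numerator. Exactly as in Section~\ref{sec:quotient}, one checks $\division(x,d) = \floor(c*n/m) = \division(c*n,m)$ and $\remainder(x,d) = \remainder(c*n,m)*d/m$, so the assumed identity $\remainder(n,d) = \division(\remainder(c*n,m)*d,m)$ is precisely $\remainder(n,d) = \floor(\remainder(x,d))$ — the first of the two conditions in Lemma~\ref{lemma:buzz2} — while the goal $\division(n,d) = \division(c*n,m)$ is the same as $\division(n,d) = \division(x,d)$, its second condition. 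By Lemma~\ref{lemma:buzz2}, once the remainder condition is known the goal is equivalent to $n \le x < n+1$. Since $x - n = n\beta$ with $\beta = c*d/m - 1$, it therefore suffices to prove $n\beta \in [0,1)$ for every $n \in [0,N]$.

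First I would pin down $\beta$ using the smallest nontrivial numerator, where the assumption $d>1$ is essential. At $n=1$ we have $\remainder(c*1,m) = c$ because $c \in [0,m)$, so the hypothesis reads $\remainder(1,d) = \floor(c*d/m)$; as $d>1$ forces $\remainder(1,d)=1$, we get $\floor(c*d/m)=1$, i.e. $1 \le c*d/m < 2$, and hence $\beta \in [0,1)$ (in particular $c/m \ge 1/d$, so $x \ge n$ throughout). Next I record the identity, valid for every $n$,
\[
x - n = \bigl(\division(c*n,m) - \division(n,d)\bigr) d + \bigl(\remainder(c*n,m)*d/m - \remainder(n,d)\bigr),
\]
which follows from $x = \division(x,d)*d + \remainder(x,d)$ and $n = \division(n,d)*d + \remainder(n,d)$. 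Because the hypothesis states $\remainder(n,d) = \floor(\remainder(c*n,m)*d/m)$, the final bracket is the fractional part of $\remainder(c*n,m)*d/m$ and so lies in $[0,1)$. Setting $k = \division(c*n,m) - \division(n,d)$, this shows $n\beta = x - n \in [k d,\, kd + 1)$.

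Now I would induct on $n$. For $n=0$ the claim $n\beta \in [0,1)$ is immediate, and there $k=0$. Assuming $n\beta \in [0,1)$ with $n+1 \le N$, monotonicity gives $0 \le (n+1)\beta = n\beta + \beta < 1 + 1 = 2 \le d$, the last inequality using $d \ge 2$. But $(n+1)\beta$ must fall in one of the disjoint bands $[kd, kd+1)$, and the only such band meeting $[0,d)$ is $[0,1)$; hence $k=0$ and $(n+1)\beta \in [0,1)$, closing the induction. Reading off $k=0$ at every $n$ yields $\division(c*n,m) = \division(n,d)$ for all $n \in [0,N]$, as required.

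The step I expect to be the genuine obstacle is upper control, and it is exactly here that the quantifier ``for all $n$'' does the work: a single numerator cannot force the quotient, since the remainder identity by itself is compatible with $k \neq 0$. What rescues the argument is the interplay of two facts established above — each increment of $n$ shifts $n\beta$ by less than $1$ (from $\beta < 1$), while consecutive admissible bands $[kd, kd+1)$ are separated by gaps of width $d-1 \ge 1$ — so a hypothesis that holds at \emph{every} $n$ traps $n\beta$ permanently in the initial band $[0,1)$ and never lets it reach $d$. The care needed is precisely in marshalling $\beta \in [0,1)$ together with $d \ge 2$ so that the inductive step cannot jump a gap.
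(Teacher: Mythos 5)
Your proof is correct, but it takes a genuinely different route from the paper's. The paper argues purely discretely: since $c\in[0,m)$, incrementing $n$ raises $c*n$ by less than $m$, so $\division(c*n , m)$ climbs by at most one per step; writing $c*n=\remainder(c*n , m)+m*\division(c*n , m)$, any increment of that quotient forces a drop in $\remainder(c*n , m)$, and the hypothesis ties such drops to the moments when $\remainder(n,d)$ resets to zero, so the two quotient staircases step in lockstep. You instead pass to the real number $x=c*n*d/m$ --- the device the paper itself reserves for deriving Theorem~\ref{thm:remainder} \emph{after} Lemma~\ref{lemma:three} --- and prove the key identity $x-n = k d + r$, where $k=\division(c*n , m)-\division(n,d)$ is an integer and, by the hypothesis, $r\in[0,1)$ is the fractional part of $\remainder(c*n , m)*d/m$; the induction (steps of size $\beta = c*d/m-1<1$ cannot jump the gaps of width $d-1\geq 1$ between the bands $[kd,kd+1)$) then pins $k=0$ for every $n$, which is the conclusion directly, so your opening appeal to Lemma~\ref{lemma:buzz2} is in fact dispensable. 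Your version buys two things the paper's somewhat informal tracking argument leaves implicit: it isolates exactly where $d>1$ enters (forcing $\remainder(1,d)=1$ to get $\beta\in[0,1)$, and supplying the gap $d-1\geq 1$), and it is quantitative --- your conclusion $N\beta<1$ is precisely the upper bound $c/m<(1+1/N)/d$ of Theorem~\ref{thm:remainder}, so you have essentially proved that theorem's forward direction en route, whereas the paper keeps Lemma~\ref{lemma:three} qualitative and defers the bound to a separate application of Lemma~\ref{lemma:buzz2}. One pedantic caveat: pinning $\beta$ uses the hypothesis at $n=1$, so you implicitly assume $N\geq 1$; for $N=0$ the conclusion is trivial (and in the paper's setting $N\geq d\geq 2$), so nothing is lost, but a one-line remark would make your proof self-contained.
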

\begin{proof}

When $n = 0$, we have that $\division(n,d) =\division(c*n , m)$ holds trivially. 
Since $c\in [0,m)$ then $c*(n+1) - c*n < m$ so when  $\division(c*n , m)$ increases following an increment of $n$ by one, it must increase by at most one. We just have to show
that it happens exactly when $\remainder(n,d)=0$.

We have  $c* n =\remainder(c*n , m) + m *  \division(c*n , m)$. The left side of this equation increases by $c$ exactly when $n$ is incremented by one. When $\division(c*n , m)$ increases by one, then it contributes $m$ to the right side. Since $m >c$, we have that an increase of $\division(c*n , m)$ corresponds to a decrease of $\remainder(c*n , m)$.

However, we have that $\remainder(n,d) =\division(\remainder(c*n , m) * d , m)$. 
From this equation, we have that whenever $\remainder(n,d)$ increases when we increment $n$ by one, then
$\remainder(c*n , m)$ must also increase. 
We know that when $n$ is incremented, then either  $\remainder(n,d)$ increases by one, or goes back to zero.  It is not possible for $\remainder(n,d)$  to increase if $\remainder(c*n , m)$  decreases: it must therefore be that a decrease in $\remainder(c*n , m)$ corresponds to  $\remainder(n,d)= 0$. 
Thus we have that an increase of $\division(c*n , m)$ following an increment of $n$  corresponds $\remainder(n,d)= 0$.  It follows that $\division(n,d) =\division(c*n , m)$.
\end{proof}


We still must derive the conditions on $c$ and $m$.
We can expand the condition that $\remainder(n,d) =\division(\remainder(c*n , m) * d , m)$  as follows:
\begin{eqnarray}
\remainder(n,d) &&= \floor\left (\frac{\left (c*n - \floor\left (\frac{c*n}{m}\right) *m\right) * d}{ m}\right)  \\
&&= \floor\left (c*n*d/m - \floor\left (\frac{c*n*d/m}{d}\right) *d\right) \\
&&= \floor (\remainder(c*n*d/m,d)). 
\end{eqnarray}

Then by Lemma~\ref{lemma:buzz2}, we have that the two constraints ($\remainder(n,d) = \floor (\remainder(c*n*d/m,d))$
and $\division(n,d) =\division(c*n*d/m , d)$) are equivalent to $n\leq c*n*d/m <n+1$, or  $m/d \leq c  < (1+1/n)m/d$. This condition should hold for all applicable values of $n$, and thus we choose to use 
 the maximal value of $n$ (i.e., $N$) as it
 provides the tightest bound --- so an equivalent expression is  $m/d \leq c < (1+1/N)m/d$.

We have derived the following theorem.

\begin{theorem}\label{thm:remainder}
Consider an integer divisor $d>0$ and a range of  integer numerators $n\in[0, N]$ where $N\geq d$ is an integer.
We have that  $\division(n,d) =\division(c*n , m)$  and $\remainder(n,d) =\division(\remainder(c*n , m) * d , m)$ for all integer numerators $n$ in the range if and only if
\begin{eqnarray}1/d \leq c/m< \left (1+\frac{1}{N} \right )1/d.\end{eqnarray}
\end{theorem}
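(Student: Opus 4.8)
The plan is to reduce both stated identities, for each fixed $n$, to the single real-number sandwich $n \leq x < n+1$ with $x = c*n*d/m$, and then invoke Lemma~\ref{lemma:buzz2}, which packages exactly this equivalence. First I would dispose of $n=0$: both the quotient and the remainder identity hold trivially there, so this case imposes no constraint on $c/m$ and can be set aside.

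The key algebraic step is rewriting the remainder identity. Expanding $\remainder(c*n,m) = c*n - \floor(c*n/m)*m$ and substituting into $\division(\remainder(c*n,m)*d, m)$, I would pull out the factor $d/m$ and use $\floor(c*n/m) = \floor\bigl((c*n*d/m)/d\bigr)$ to collapse the nested floors, obtaining $\division(\remainder(c*n,m)*d, m) = \floor(\remainder(c*n*d/m, d))$. In parallel, the quotient identity $\division(n,d) = \division(c*n,m)$ is nothing but $\division(n,d) = \division(c*n*d/m, d)$, since $\division(c*n,m) = \floor(c*n/m) = \division(c*n*d/m, d)$. I expect this collapse of the iterated floor expression to be the main obstacle: it is the only place where care with the floor operator is genuinely required, whereas everything afterward is bookkeeping.

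With $x = c*n*d/m$, the two conditions become precisely the hypotheses of Lemma~\ref{lemma:buzz2}, so the pair holds for a given $n$ if and only if $n \leq c*n*d/m < n+1$. For $n \geq 1$ I would divide through by $n$ to obtain the equivalent $1/d \leq c/m < (1+1/n)/d$. Finally I would take the conjunction over all $n \in [1,N]$ (together with the unconstrained case $n=0$): the lower bound $1/d \leq c/m$ is identical for every $n$, while the upper bound $c/m < (1+1/n)/d$ is most restrictive when $1/n$ is smallest, namely at $n=N$. This yields $1/d \leq c/m < (1+1/N)/d$, and because every step is an equivalence the argument establishes both directions of the biconditional simultaneously. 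I would also note that Lemma~\ref{lemma:three} shows the quotient identity is already forced by the remainder identity once $c \in [0,m)$, so listing both conditions in the statement is for emphasis rather than added strength.
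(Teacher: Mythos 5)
Your proposal is correct and takes essentially the same route as the paper's own proof: the identical collapse of the nested floors to $\floor(\remainder(c*n*d/m,d))$, the same invocation of Lemma~\ref{lemma:buzz2} with $x = c*n*d/m$, and the same observation that the conjunction over $n\in[1,N]$ is binding at $n=N$. The only cosmetic differences are your explicit handling of $n=0$ and the closing aside about Lemma~\ref{lemma:three} (which, as the paper notes, requires $c\in[0,m)$ and $d>1$), neither of which changes the argument.
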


We can check that the conditions of Theorem~\ref{thm:remainder} are always met with  $c = \ceiling (m/d)$ and $m\geq N * d$. 

\begin{remark}
In previous work~\cite[Theorem~1]{lemire2019faster}, Lemire et al. reported an upper bound of  $c/m \leq (1+1/(N+1)) 1/d$ as a sufficient  (but not necessary) condition.
For the difference to matter, we need that there is an integer in 
the interval $((1+1/(N+1)) m/d,(1+1/N) m/d)$.
It happens in some instances, for example if
$N = 10$, $d = 5$, and $m=2^5$, we have that $7\in((1+1/(N+1)) m/d,(1+1/N) m/d)$. However, in the previous work~\cite{lemire2019faster}, Lemire et al. considered only the case where $m=N+1$. We can show that if $m=N+1$ and $m\geq 3$ then the earlier bound is tight. Indeed, if there is an integer $z$ in $((1+1/(N+1)) m/d,(1+1/N) m/d)$ then there must be an integer $z * d$ in $((1+1/(N+1)) m,(1+1/N) m)$. Substituting $N=m-1$, the interval becomes $(m+1, m+1+ 1/(m-1))$: because $m+1$ is an integer and $1/(m-1)\leq 1/2$, there is no integer in this interval. Hence, there cannot be an integer in $((1+1/(N+1)) m/d,(1+1/N) m/d)$ and the earlier bound is tight.
\end{remark}

If we only desire the remainder, and not the quotient, we can lift the restriction that $c \in [0, m)$: we can replace $c$ by $c+k*m$ for any integer $k$.

\subsection{Check for Divisibility}
\label{sec:divisibililty}
We have that $n$ is a multiple of $d$ if and only if $\remainder(n,d) = 0$.
Given Theorem~\ref{thm:remainder},  we can check whether $\remainder(n,d) = 0$ by checking whether 
$\division(\remainder(c*n , m) * d , m) = 0$.
In turn, we have that this last equation holds if and only if $\remainder(c*n , m) * d < m$ or  $\remainder(c*n , m)  < m/d$. 
Thus  $\remainder(c*n , m)  < m/d$ is a divisibility test. However, we  show the more elegant result that
$\remainder(c*n , m)  < c$ is a divisibility test (see Proposition~\ref{prop:divisibility}).

By the assumption of Theorem~\ref{thm:remainder}, we have that $m/d \leq c$. Thus if $n$ is a multiple of $d$, then we have that $\remainder(c*n , m)  < c$.
We need to prove the counterpart, that  $\remainder(c*n , m)  < c$ implies that
$n$ is a multiple of $d$.
By Theorem~\ref{thm:remainder}, we have that  $\division(n,d) =\division(c*n , m)$. Hence we have that $\division(c*n , m) =\division(c*(n - \remainder(n,d)) , m)$ since $n$ and $n - \remainder(n,d)$ have the same quotient with respect to $d$.
When two values $z_1, z_2$ have the same quotient ($\division(z_1, m)=\division(z_2, m)$)
then their difference must be captured by their remainders: 
$z_2-z_1 = \remainder(z_2, m)-  \remainder(z_1, m)$.
In this case, taking $z_1=c*n$ and $z_2=c*(n-\remainder(n,d))$,we have that their difference is $c* \remainder(n,d)$.
It follows that $\remainder(c*n , m) -\remainder(c*(n - \remainder(n,d)), m) = c*\remainder(n,d)$ and therefore $c*\remainder(n,d) \leq \remainder(c*n , m)$.
Thus if $\remainder(c*n , m)  < c$, we have  $c*\remainder(n,d) < c$ which implies  $\remainder(n,d) = 0$.

\begin{proposition}\label{prop:divisibility}
Consider an integer divisor $d>0$. We have that $d$ divides $n\in[0, N]$  if and only if 
$\remainder(c*n , m)  < c$
subject to the condition that
\begin{eqnarray}1/d \leq c/m< \left (1+\frac{1}{N} \right )1/d.\end{eqnarray}
\end{proposition}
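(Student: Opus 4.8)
The plan is to prove the two directions of the equivalence separately, leaning on Theorem~\ref{thm:remainder}, whose hypothesis is exactly the condition $1/d \leq c/m < (1+1/N)1/d$ assumed here. I will freely use both of its conclusions: that $\division(n,d) = \division(c*n, m)$ and that $\remainder(n,d) = \division(\remainder(c*n,m)*d, m)$ for all $n \in [0,N]$. I will also record the elementary consequence $m/d \leq c$ obtained by multiplying the left inequality $1/d \leq c/m$ by $m>0$.

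For the forward direction, suppose $d$ divides $n$, so $\remainder(n,d) = 0$. Feeding this into the remainder identity of Theorem~\ref{thm:remainder} gives $\division(\remainder(c*n,m)*d, m) = 0$, which is equivalent to $\remainder(c*n,m)*d < m$, i.e.\ $\remainder(c*n,m) < m/d$. Combining with $m/d \leq c$ yields $\remainder(c*n,m) < c$, as desired. This direction is routine.

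The harder direction is the converse: assuming $\remainder(c*n,m) < c$, I must conclude $\remainder(n,d) = 0$. The key idea is to compare $c*n$ with $c*n'$, where $n' = n - \remainder(n,d) = d*\division(n,d)$ is the largest multiple of $d$ not exceeding $n$. Since $n$ and $n'$ share the same quotient under division by $d$, and both lie in $[0,N]$, Theorem~\ref{thm:remainder} applied to each forces $\division(c*n,m) = \division(c*n',m)$. When two integers have equal quotients modulo $m$, their difference equals the difference of their remainders; taking $z_1 = c*n$ and $z_2 = c*n'$, whose difference is $z_1 - z_2 = c*\remainder(n,d)$, I obtain $\remainder(c*n,m) - \remainder(c*n',m) = c*\remainder(n,d)$. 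The main obstacle, and the crux of the argument, is precisely establishing this equality of remainders through the shared quotient; once it is in hand the rest is immediate.

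To finish, I drop the nonnegative term $\remainder(c*n',m) \geq 0$ to get the lower bound $c*\remainder(n,d) \leq \remainder(c*n,m)$. Under the hypothesis $\remainder(c*n,m) < c$ this gives $c*\remainder(n,d) < c$, and since $c > 0$ and $\remainder(n,d)$ is a nonnegative integer, we must have $\remainder(n,d) = 0$, i.e.\ $d$ divides $n$. Combining the two directions establishes the stated equivalence.
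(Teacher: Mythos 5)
Your proposal is correct and follows essentially the same route as the paper's own proof: the easy direction via $\remainder(c*n,m) < m/d \leq c$ from the remainder identity, and the converse by comparing $c*n$ with $c*(n-\remainder(n,d))$, using the quotient identity of Theorem~\ref{thm:remainder} to equate $\division(c*n,m)$ and $\division(c*n',m)$ and then extracting $c*\remainder(n,d) \leq \remainder(c*n,m)$ from the difference of remainders. No gaps; your explicit note that $n' \in [0,N]$ is a small point the paper leaves implicit.
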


Proposition~\ref{prop:divisibility} selects a value of $c$ in $[0,m)$ when $d>1$.

\section{Multiply-Add-Divide Results}
\label{sec:multiply-add}
Some authors~\cite{Robison:2005:NUD:1078021.1078059, Magenheimer:1987:IMD:36177.36189} have considered the case where we replace the division by a formula of the multiply-add form $\floor((c * n + b)/m)$ for some $b$. The benefit of the multiply-add approach is that it may allow one to pick a smaller value of $c$, compared to the simpler form
$\floor(c * n/m)$.
The derivations are nearly identical as in \S~\ref{sec:Multiply-Divide}, so we just give our results.

\begin{theorem}\label{thm:quotientrobison}
Consider an integer divisor $d>0$ and a range of  integer numerators $n\in[0, N]$ where $N\geq d$ is an integer.
We have that $\division(n,d) =\division(c*n +c , m)$ for all integer numerators $n$ in the range  if and only if
\begin{eqnarray} \left (1-  \frac{1}{N- \remainder(N,d) + 1} \right ) 1/d \leq c/m < 1/d.\end{eqnarray} 
\end{theorem}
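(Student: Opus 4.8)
The plan is to mirror the quotient derivation of \S\ref{sec:quotient} essentially verbatim, with $c*n$ replaced by $c*n+c = c*(n+1)$. First I would set $x = (c*n+c)*d/m$, so that $\floor(x/d) = \division(c*n+c, m)$, and apply the same elementary fact used for the quotient: $\floor(x/d) = Q$ is equivalent to $x \in [Qd, Qd+d)$. Taking $Q = \floor(n/d) = \division(n,d)$ and using $d*\floor(n/d) = n - \remainder(n,d)$, the target identity $\division(n,d) = \division(c*n+c, m)$ becomes equivalent, for each fixed $n$, to
$$n - \remainder(n,d) \leq (c*n+c)*d/m < n - \remainder(n,d) + d.$$
Since $c*n + c = c*(n+1)$, dividing through by $n+1 > 0$ isolates the quantity $c*d/m$ and turns the requirement ``holds for all $n \in [0,N]$'' into a single two-sided bound: $c*d/m$ must be at least $\max_n (n - \remainder(n,d))/(n+1)$ and strictly below $\min_n (n - \remainder(n,d) + d)/(n+1)$.

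For the upper bound I would show the minimum of the right-hand expression is exactly $1$. Writing $n - \remainder(n,d) + d = d(\floor(n/d)+1)$ and using $\floor(n/d)+1 \geq (n+1)/d$ gives $(n - \remainder(n,d)+d)/(n+1) \geq 1$, with equality precisely when $\remainder(n,d) = d-1$; such an $n$ exists in range (e.g.\ $n = d-1$, legitimate since $N \geq d$). Hence $c*d/m < 1$, i.e.\ $c/m < 1/d$, which is the claimed (strict) upper bound.

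The lower bound is the step I expect to be the main obstacle, and it is the exact analogue of Lemma~\ref{lemma:ohoh}. Here I must maximize $(n - \remainder(n,d))/(n+1) = 1 - (\remainder(n,d)+1)/(n+1)$ over $n \in [0,N]$, equivalently minimize $(\remainder(n,d)+1)/(n+1)$. Partitioning the numerators by their remainder into the sets $N_k = \{n \in [0,N] : \remainder(n,d) = k\}$, on each $N_k$ the expression $(k+1)/(n+1)$ is minimized at the largest member, so the only candidates are the integers in $[N-d+1, N]$, one per residue class. Among these I would compare against $w = N - \remainder(N,d)$, the largest multiple of $d$ not exceeding $N$, for which $\remainder(w,d)=0$ and the value is $1/(w+1)$. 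A short cross-multiplication in two cases --- candidates above $w$ (write $n = w+j$, so $\remainder(n,d)=j$) and below $w$ (write $n = w - s$, so $\remainder(n,d) = d - s$) --- shows $1/(w+1)$ is smallest, exactly as in the quotient argument. This yields $\max_n (n - \remainder(n,d))/(n+1) = 1 - 1/(N - \remainder(N,d)+1)$, hence $c*d/m \geq 1 - 1/(N - \remainder(N,d)+1)$, i.e.\ $(1 - 1/(N-\remainder(N,d)+1))1/d \leq c/m$. Combining the two bounds gives the stated necessary and sufficient condition; the attainment of the maximum (non-strict) together with the strict minimum explains why the lower inequality is $\leq$ while the upper is $<$.
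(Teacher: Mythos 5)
Your proposal is correct and is essentially the paper's own proof: for Theorem~\ref{thm:quotientrobison} the paper explicitly says the derivation is ``nearly identical'' to \S\ref{sec:quotient}, and your write-up carries out exactly that adaptation --- using $c*n+c=c*(n+1)$ to divide by $n+1$ (which, pleasantly, also covers $n=0$ without a special case), then optimizing the two sides over $n\in[0,N]$ with the same residue-class partition as Lemma~\ref{lemma:ohoh}. The cross-multiplications you leave implicit do check out (e.g.\ for $n=w+j$ the comparison $(j+1)(w+1)\geq w+j+1$ reduces to $jw\geq 0$, strict for $j\geq 1$ since $w\geq d\geq 1$), so nothing is missing.
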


\begin{remark}Robison~\cite{Robison:2005:NUD:1078021.1078059} derived the sufficient condition $(1-1/(N+1)) 1/d  \leq c/m < 1/d$. When $\remainder(N,d)\neq 0$, Robison's bound is suboptimal unlike Theorem~\ref{thm:quotientrobison}. Drane et al.~\cite{drane2012correctly} derive a similar result to ours for the case where $d$ is odd.
\end{remark}

\begin{theorem}\label{thm:remainderrobison}
Consider an integer divisor $d>0$ and a range of  integer numerators $n\in[0, N]$ where $N\geq d$ is an integer.
We have that  $\division(n,d) =\division(c*n+c , m)$  and $\remainder(n,d) =\division(\remainder(c*n +c , m) * d , m)$ for all integer numerators $n$ in the range if and only if
\begin{eqnarray}\left (1-\frac{1}{N+1} \right )1/d \leq c/m<  1/d.\end{eqnarray}
\end{theorem}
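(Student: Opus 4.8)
The plan is to follow the derivation of Theorem~\ref{thm:remainder} almost verbatim, exploiting the single observation that the multiply-add form collapses to $c*n + c = c*(n+1)$. Wherever the multiply-divide argument of \S~\ref{sec:remainder} manipulated the quantity $c*n$, I would instead carry $c*(n+1)$, and the rest of the structure is unchanged.

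First I would expand the remainder condition exactly as in \S~\ref{sec:remainder}. Writing $\remainder(c*n+c, m) = c*(n+1) - \floor(c*(n+1)/m)*m$ and substituting into $\division(\remainder(c*n+c, m)*d, m)$, the nested floors collapse: using the identity $\floor(c*(n+1)/m) = \division(c*(n+1)*d/m, d)$, one obtains $\division(\remainder(c*n+c, m)*d, m) = \floor(\remainder(c*(n+1)*d/m, d))$. Hence the remainder condition reads $\remainder(n,d) = \floor(\remainder(c*(n+1)*d/m, d))$, while the quotient condition $\division(n,d) = \division(c*n+c, m)$ reads $\division(n,d) = \division(c*(n+1)*d/m, d)$.

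Next I would invoke Lemma~\ref{lemma:buzz2} with $x = c*(n+1)*d/m$. Because that lemma packages a remainder identity and a quotient identity into a single two-sided inequality, the two conditions above hold simultaneously if and only if $n \le c*(n+1)*d/m < n+1$. Dividing through by $n+1 \ge 1$ turns this into the $c$-centric form $1 - \frac{1}{n+1} \le c*d/m < 1$.

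Finally I would intersect these constraints over all $n \in [0,N]$. The upper bound $c*d/m < 1$ does not depend on $n$, whereas the lower bound $1 - \frac{1}{n+1}$ is increasing in $n$ and is therefore tightest at $n = N$; the degenerate case $n=0$ contributes only the already-present upper bound. This yields $1 - \frac{1}{N+1} \le c*d/m < 1$, and dividing by $d$ produces precisely the stated interval, with the non-strict lower and strict upper inequalities matching those of the theorem. I expect no genuine obstacle here: the only step warranting care is the collapse of the nested floors in the second paragraph, which is identical to the computation already carried out for Theorem~\ref{thm:remainder}, and the harmless book-keeping of the $n=0$ endpoint.
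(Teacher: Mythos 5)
Your proof is correct and is exactly the argument the paper intends: the paper gives no explicit proof of Theorem~\ref{thm:remainderrobison}, stating only that the multiply-add derivations are nearly identical to the earlier ones, and your substitution of $c*(n+1)$ for $c*n$ throughout the derivation of Theorem~\ref{thm:remainder} (collapsing the nested floors, applying Lemma~\ref{lemma:buzz2} with $x = c*(n+1)*d/m$ to get $n \leq c*(n+1)*d/m < n+1$, then intersecting over $n\in[0,N]$ with the lower bound tightest at $n=N$) carries that plan through without gaps. Your handling of the $n=0$ case is even slightly cleaner than in the multiply-divide setting, since dividing by $n+1\geq 1$ needs no special-casing of $n=0$.
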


\begin{proposition}\label{prop:multiplyadddivisibility}
Consider an integer divisor $d>0$. We have that $d$ divides $n\in[0, N]$  if and only if 
$\remainder(c*n +c, m)  < c$
subject to the condition that
\begin{eqnarray}\left (1-\frac{1}{N+1} \right )1/d \leq c/m<  1/d.\end{eqnarray}\end{proposition}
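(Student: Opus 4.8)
The plan is to transcribe the proof of Proposition~\ref{prop:divisibility}, replacing every occurrence of $c*n$ by the multiply-add form $c*n+c$ and invoking Theorem~\ref{thm:remainderrobison} wherever that earlier argument relied on Theorem~\ref{thm:remainder}. Under the hypothesis $\left(1-\frac{1}{N+1}\right)1/d \le c/m < 1/d$, Theorem~\ref{thm:remainderrobison} furnishes both $\division(n,d)=\division(c*n+c,m)$ and $\remainder(n,d)=\division(\remainder(c*n+c,m)*d,m)$ for every $n\in[0,N]$; these two identities, together with the normalization $c\in[0,m)$, are the only facts about $c$ and $m$ that I will use.

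For the converse (that $\remainder(c*n+c,m)<c$ forces $d$ to divide $n$), I would reuse the remainder-capture argument. Since $n$ and $n-\remainder(n,d)$ share the same quotient by $d$, the quotient identity gives $\division(c*n+c,m)=\division(c*(n-\remainder(n,d))+c,m)$. Two integers with a common quotient modulo $m$ differ by exactly the difference of their remainders, and here that difference is $(c*n+c)-(c*(n-\remainder(n,d))+c)=c*\remainder(n,d)$, so I obtain $\remainder(c*n+c,m)-\remainder(c*(n-\remainder(n,d))+c,m)=c*\remainder(n,d)$. As the subtracted remainder is non-negative, this yields $c*\remainder(n,d)\le\remainder(c*n+c,m)<c$, which forces $\remainder(n,d)=0$. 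This direction goes through for every $n$.

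The forward direction (that $d$ divides $n$ implies $\remainder(c*n+c,m)<c$) is where the multiply-add case genuinely differs. In Proposition~\ref{prop:divisibility} one uses divisibility to get $\remainder(c*n,m)<m/d$ and then leans on $m/d\le c$; but the present hypothesis is $c/m<1/d$, i.e.\ $c*d<m$, so now $c<m/d$ and that shortcut is unavailable. Instead I would compute the remainder outright: writing $n=d*q$ with $q=\division(n,d)$ and using $\division(c*n+c,m)=q$, I get $\remainder(c*n+c,m)=(c*d*q+c)-q*m=c-q*(m-c*d)$. Since $m-c*d>0$, this lies strictly below $c$ precisely when $q\ge 1$, while the lower bound, which rearranges to $N*(m-c*d)\le c*d$, keeps $q*(m-c*d)\le c$ so that the remainder stays non-negative.

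I expect the one delicate point to be the boundary $q=0$, i.e.\ $n=0$: there $\remainder(c*0+c,m)=\remainder(c,m)=c$ because $c\in[0,m)$, so the strict inequality $\remainder(c*n+c,m)<c$ fails to hold even though $d$ divides $0$. The forward implication is thus sharp only for positive multiples of $d$, so the argument must dispatch $n=0$ as a trivial special case rather than through the generic computation above. Isolating this edge is the single step that requires care beyond a mechanical adaptation of Proposition~\ref{prop:divisibility}.
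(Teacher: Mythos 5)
Your two computations are sound, and your diagnosis of where the multiply-add case departs from Proposition~\ref{prop:divisibility} is exactly right. The paper never writes out a proof of this proposition (it asserts the multiply-add derivations are ``nearly identical'' to the earlier ones), but a literal transcription would indeed break in the forward direction: Proposition~\ref{prop:divisibility} concludes from $\remainder(c*n,m)<m/d$ via the bound $m/d\leq c$, whereas here the hypothesis gives $c*d<m$, i.e.\ $c<m/d$, so that step reverses. Your replacement --- using the quotient identity of Theorem~\ref{thm:remainderrobison} to compute, for $n=d*q$, that $\remainder(c*n+c,m)=c*d*q+c-q*m=c-q*(m-c*d)$, which is strictly below $c$ precisely when $q\geq 1$ since $m-c*d\geq 1$ --- is correct, as is your converse via the remainder-capture argument (noting $n-\remainder(n,d)\in[0,N]$, so the theorem applies to it as well).

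The one genuine problem is your closing plan to ``dispatch $n=0$ as a trivial special case'': it cannot be dispatched, because your own computation shows the stated equivalence is \emph{false} there. Since $c\in[0,m)$ we get $\remainder(c*0+c,m)=c\not<c$, yet $d$ divides $0$. Concretely, take $d=3$, $N=9$, $m=10$, $c=3$: the hypothesis $(1-\frac{1}{10})\frac{1}{3}\leq \frac{3}{10}<\frac{1}{3}$ holds, Theorem~\ref{thm:remainderrobison} holds on all of $[0,9]$, yet the test declares $0$ not divisible by $3$. So you have in fact uncovered an error in the proposition as printed (an edge that does not arise in Proposition~\ref{prop:divisibility}, where $\remainder(c*0,m)=0<c$); the repair is to amend the statement rather than the proof, either by restricting to $n\in[1,N]$ or by weakening the test to $\remainder(c*n+c,m)\leq c$. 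The weakened test is exact on all of $[0,N]$: for a non-multiple with $\remainder(n,d)=1$, equality $\remainder(c*n+c,m)=c$ would require $c=q*(m-c*d)$ with $n-1=d*q$, and the lower bound, rearranged as $N*(m-c*d)\leq c*d$ just as you did, forces $q\geq N/d$ and hence $n\geq N+1$, out of range. With that amendment, your proof is complete and is the right reconstruction of the argument the paper omitted.
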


We can check that the conditions of Theorem~\ref{thm:quotientrobison} are met when $c = \floor(m/d)$ and $m \geq d * (N - \remainder(N,d) +1)$ as long as $m$ is not divisible by $d$. Proposition~\ref{prop:multiplyadddivisibility} is satisfied with the more stringent inequality  $m \geq d * (N+1)$.

\section{Complementarity}
\label{sec:complementarity}

When processing numerators and divisors in $[0,n]$, it may be most convenient if the constant $c$ is also in $[0,n]$.
In this respect, the multiply-shift and multiply-add-shift results are complementary as first shown by Robison~\cite{Robison:2005:NUD:1078021.1078059}.
Suppose that we want to divide all integers $n\in [0,N]$ by $d$ in the case where $N+1$ is a power of two. For example, we may have $N=2^{64}-1$. We want the constant $m$ to be a power of two.

When $d$ is a power of two, efficient division and remainder routines are available. The quotient requires a single binary shift while the remainder requires selecting the low-weight bits with a mask. Thus suppose that the divisor $d$ is not a power of two. 

To satisfy the constraints of Theorems~\ref{thm:quotient} and~\ref{thm:remainder}, we can pick $c = \ceiling (m/d)$ and $m = 2^{\ceiling(\log_2(d))} * (N+1) $.
Unfortunately,  $c = \ceiling (m/d)$ is 
not in $[0, N]$
which may cause implementation
issues. Indeed, if we want to do 64-bit arithmetic on hardware with 64-bit machine words, it is most convenient if all constants fit in 64-bit words.
Thus we may try a smaller constant. The choice $m = 2^{\floor(\log_2(d))} * (N+1) $ is convenient since  $c = \ceiling (m/d)$ is then an integer $\in [0, N]$.
Unfortunately, it is not a valid choice for all divisors $d$, as per the requirements of Theorems~\ref{thm:quotient} and~\ref{thm:remainder}. For example, if $N+1= 2^{32}$ and $d=19$,
picking $m = 2^{\floor(\log_2(d))} * (N+1) =2^{4+32}$, we get $c=  \ceiling (m/d)=3616814566$. We have that $N-\remainder(N+1,d)=2^{32}-6$. We can verify that $3616814566 \geq 2^{4+32}/19 * \left (1 + \frac{1}{ (2^{32}-6)} \right )\approx 3616814565.89$. 
%
Thus the conditions of Theorem~\ref{thm:quotient} are not satisfied.

Thankfully, we can fall back on the multiply-add-shift results. Suppose that setting $c = \ceiling (m/d)$ and $m = 2^{\floor(\log_2(d))} * (N+1) $ fails to satisfy the conditions of Theorem~\ref{thm:quotient}, then we have that 
\begin{eqnarray} \ceiling (m/d) \geq \left (1 + \frac{1}{N- \remainder(N+1,d)} \right )m/d.\label{eq:thm1fails}\end{eqnarray} 
It may be convenient to simplify this equation further. We can multiply both sides by $d$.
We have that $d *  \ceiling (m/d) = m + d - \remainder(m,d) $ on the left-hand-side.  On the right-hand-side, we have $m$ plus some quantity that may not be integer, but we can safely apply the $\ceiling$ function since the left-hand-side is an integer. After subtracting $m$ from both sides, we get
\begin{eqnarray}  d - \remainder(m,d)  \geq   \ceiling\left (\frac{m}{N- \remainder(N+1,d)}\right ).\label{eq:simplerchk} \end{eqnarray} 
(E.g., with $d=19$ and $N+1=2^{32}$, we get
$18\geq 17$.)
We want to show that the conditions of Theorem~\ref{thm:quotientrobison} are satisfied when keeping $m = 2^{\floor(\log_2(d))} * (N+1)$ and setting $c= \floor (m/d)$. 
That is, if Theorem~\ref{thm:quotient} fails us, we can use Theorem~\ref{thm:quotientrobison} so that it is always possible to pick $c\in [0,N]$.

We have that $2^{\floor(\log_2(d))} > d/2$
and hence $m > d * (N+1) / 2$ and therefore
$d/m < 2/(N+1)$.
We assume that
$d$ is not a power of two.
Since we assume $N+1$ and hence $m$ are powers of two,
we have that
$ \ceiling(m/d) = \floor (m/d) +1$ and thus 
from (\ref{eq:thm1fails})  
\begin{eqnarray}  \floor (m/d) +1 \geq \left (1 + \frac{1}{N- \remainder(N+1,d)} \right )m/d.\end{eqnarray} 
We can divide by $m$ and subtract $1/m$ to get
\begin{eqnarray} \floor (m/d)/m & \geq & \left (1 + \frac{1}{N- \remainder(N+1,d)}\right )1/d -1/m \\
& = & \left (1 + \frac{1}{N- \remainder(N+1,d)} - \frac{d}{m} \right )1/d \\
& \geq & \left (1 + \frac{1}{N +1} - \frac{d}{m}  \right )1/d \\
& > & \left (1 + \frac{1}{N+1} - \frac{2}{N+1}  \right )1/d \\
& = & \left (1- \frac{1}{N+1}  \right )1/d\\
& \geq & \left (1- \frac{1}{N- \remainder(N+1,d)+1}  \right )1/d.
\end{eqnarray}

We have shown Proposition~\ref{Proposition:complementary}
, which tells us 
that it is always possible to compute the quotient using $c\in [0,N]$,\footnote{This result is not novel~\cite{Robison:2005:NUD:1078021.1078059}.}
selecting the approach using Equation~\ref{eq:simplerchk}.
\begin{proposition}\label{Proposition:complementary}
Consider an integer divisor $d>0$ that is not a power of two. Let $N$ be an integer such that $N+1$ is a power of two, then we can compute the quotient of any integer $n\in[0, N]$ by $d$  using a constant $c\in [0,N]$ as follows. Let $m = 2^{\floor(\log_2(d))} * (N+1) $. 
\begin{itemize}
    \item if $d - \remainder(m,d) <  \ceiling(m/(N- \remainder(N+1,d)))$, we let $c = \ceiling (m/d)$ and we have $\division(n,d) =\division(c*n , m)$.
    \item Otherwise we  let $c = \floor (m/d)$ and we have $\division(n,d) =\division(c*n +c, m)$.
\end{itemize}
\end{proposition}

The same complementarity exists between the novel theorems for the computation of the quotient and remainder: Theorems~\ref{thm:remainder} and~\ref{thm:remainderrobison}. Indeed, 
 if we choose   $m = 2^{\floor(\log_2(d))} * (N+1) $ and  $c = \ceiling (m/d)$, but the conditions of Theorem~\ref{thm:remainder} are not met, then we have that 
\begin{eqnarray} \ceiling (m/d) \geq \left (1 + \frac{1}{N} \right )m/d.\end{eqnarray} 
We  have that
$ \ceiling(m/d) = \floor (m/d) +1$ since $m$ is not divisible by $d$ and thus
\begin{eqnarray}  
\floor (m/d)& \geq &\left (1 + \frac{1}{N} \right )m/d -1 \\
& = & \left (1 + \frac{1}{N} - \frac{d}{m} \right )1/d \\
& > & \left (1 + \frac{1}{N+1} - \frac{d}{m}  \right )1/d  \label{fiddled}\\
& > & \left (1 + \frac{1}{N+1} -  \frac{2}{N+1}  \right )1/d \\
& = & \left (1- \frac{1}{N+1}  \right )1/d.
\end{eqnarray}

Thus, again, it is always possible to pick $c\in[0,N]$: if not with Theorem~\ref{thm:remainder}, then with Theorem~\ref{thm:remainderrobison},
using the analog of Equation~\ref{eq:simplerchk} to select the
correct approach.
We formalize the result with the novel Proposition~\ref{Proposition:complementarynovel}.

\begin{proposition}\label{Proposition:complementarynovel}
Consider an integer divisor $d>0$ that is not a power of two. Let $N$ be an integer such that $N+1$ is a power of two, then we can compute the quotient and the remainder of any integer $n\in[0, N]$ by $d$  using a constant $c\in [0,N]$ as follows. Let $m = 2^{\floor(\log_2(d))} * (N+1) $. 
\begin{itemize}
    \item if $d - \remainder(m,d) <  \ceiling(m/N)$,\\ we let $c = \ceiling (m/d)$ and we have $\division(n,d) =\division(c*n , m)$ and $\remainder(n,d) =\division(\remainder(c*n, m) * d , m).$
    \item Otherwise we  let $c = \floor (m/d)$ and we have $\division(n,d) =\division(c*n +c , m)$ and $\remainder(n,d) =\division(\remainder(c*n +c, m) * d , m)$.
\end{itemize}
\end{proposition}

\paragraph{Ideal divisors}
When $N+1$ is a power of two, we have shown that it is always possible to pick $m = 2^{\floor(\log_2(d))} * (N+1) $.
However, for some divisors, we can pick an even smaller $m$, even with the constraint that $m$ be a power of two. 
Suppose  that you would like to compute both the remainder and the quotient, as in Theorem~\ref{thm:remainder}. Picking $m= N+1$ would be especially convenient. 
(In architectures where the product is stored in a pair of registers, 
$\division(x,m)$ and $\remainder(x,m)$ are essentially free if $m=2^W$, where
$W$ is the architecture's word size.)
We choose $m = (N+1) $  and $c = \ceiling ((N+1)/d)$. To satisfy the conditions of Theorem~\ref{thm:remainder}, we need that
$c *d < (1+1/N)*(N+1) = N + 2 + 1/N$. We have that
$\ceiling ((N+1)/d) * d = (N+1) + d - \remainder(N+1,d) $. Assuming that $d$ does not divide $N+1$, the inequality holds if and only if $d - \remainder(N+1,d)  = 1$ which is true if and only if $d $ divides $N+2$. We refer to any such divisors as being \emph{ideal},
as they enable us to pick $m=N+1$
. See Table~\ref{tab:ideal}.\footnote{They are related to Fermat numbers~\cite{jaroma2007classical}.}

We can show that we cannot pick $m$ to be a smaller power of two---unless $d$ divides $N+1$. Indeed, if we pick $m=(N+1)/2$, then the conditions of Theorem~\ref{thm:remainder} require that $c *d < (1+1/N)*(N+1)/2$
but $1/N * (N+1)/2<1$ for $N>1$ and since $c*d$ is an integer, we then must have that $c*d\leq (N+1)/2$. Yet we have $\ceiling ((N+1)/(2*d)) * d>  (N+1)/d$ when $d$ does not divide $N+1$.

\begin{table}\caption{Some ideal divisors\label{tab:ideal}}\centering
\begin{tabular}{cl}
\toprule
Range $[0,N+1)$ & ideal divisor \\\midrule
$[0,2^{32})$ &  641 \\
$[0,2^{32})$ &  6700417\\
$[0,2^{64})$ &  274177 \\
$[0,2^{64})$ &  67280421310721  \\\bottomrule
\end{tabular}
\end{table}


\section{Rounding}

Instead of computing the integer division ($\floor(n/d)$), we sometimes wish to round the result to the nearest integer. 
Unsurprisingly and maybe obviously, it is possible to do with an expression of the form $\floor(z/d)$ for some integer $z$ that depends on $n$ and $d$. Hence, our efficient integer quotient computations extend to the computation of the rounded division. Indeed, we have that $\floor((n+ \floor(d/2))/d)$ is the round-to-nearest function; when $d$ is odd and $n$ is between two multiple of $d$, it rounds up.
To round down, we can use  $\floor((n+ \ceiling(d/2) - 1)/d)$ instead.\footnote{We define $\ceiling(x)$ as the smallest integer that is no smaller than $x$.}

It is also possible to handle more complicated scenarios. For example, what if we wish to round to the nearest integer, rounding to the nearest even integer when we are in-between two integers? It is only relevant when the division $d$ is even. Let $z=n+ \floor(d/2)$. 
Whenever $z$ is a multiple of $d$  and $\floor(z/d)$ is odd, we return $\floor(z/d)-1$, else we return $\floor(z/d)$.
We can check that an integer is a multiple of $d$ efficiently with Proposition~\ref{prop:divisibility} for example.
The division and the divisibility test can reuse the same intermediate computations. 




\section{Conclusion}
\label{sec:conclusion}

Our work shows that a unified approach, with the same precomputed constants, allows the  computation of the quotient and remainder, while further providing fast divisibility checks. Thus, for example, an algorithm could check efficiently whether an integer is divisible by another and, in the negative case, compute the remainder while reusing the prior work.

Future work might address the problem of computing generalized expressions such as $\floor(n * x)$ for integer values $n$ and real numbers $x$.
For example, we have that $\floor(\log_2( 5^x ))=  \floor(x * \log_2( 5))$ can be computed as multiplication and a division by a power of two: 
 $(152170 * x) \div 2^{16} $ for $x \in (-400,350)$. We find such optimizations hand-coded in highly optimized algorithms~\cite{number2021}: it might prove useful to formalize the derivation of such routines.

\bibliographystyle{model1-num-names}
\bibliography{wordmodulo}

\end{document}